\documentclass[runningheads]{llncs}
\usepackage[T1]{fontenc}
\usepackage{todonotes}
\usepackage[english]{babel}
\usepackage{amsmath}
\usepackage{amssymb}
\usepackage{graphicx}
\usepackage[ruled,vlined]{algorithm2e}
\usepackage{booktabs, wrapfig}
\usepackage{hyperref}
\usepackage{xargs}
\usepackage{subfig}
\usetikzlibrary{calc}
\usetikzlibrary{arrows.meta, positioning, shapes.misc, fit}
\usetikzlibrary{arrows.meta, positioning, fit}

\begin{document}
\title{Adversarial Robustness of Time-Series Classification for Crystal Collimator Alignment}

\titlerunning{Robustness of Time-Series Classification for Crystal Collimator Alignment}

\author{Xaver Fink\inst{1,2}, Borja Fernandez Adiego\inst{1}, Daniele Mirarchi\inst{1}, Eloise Matheson\inst{1}, Alvaro Garcia Gonzales\inst{1}, Gianmarco Ricci\inst{3}, Joost-Pieter Katoen\inst{2}}
\authorrunning{X. Fink et al.}
\institute{CERN, Geneva, Switzerland\and
RWTH Aachen University, Aachen, Germany\and DESY, Hamburg, Germany}
\maketitle            
\begin{abstract} 
In this paper, we \emph{analyze and improve the adversarial robustness} of a convolutional neural network (CNN) that assists crystal-collimator alignment at CERN's Large Hadron Collider (LHC) by classifying a beam-loss monitor (BLM) time series during crystal rotation. 
We formalize a local robustness property for this classifier under an adversarial threat model based on real-world plausibility. Building on established \emph{parameterized input-transformation} patterns used for transformation- and semantic-perturbation robustness, we instantiate a preprocessing-aware wrapper for our deployed time-series pipeline: we encode time-series normalization, padding constraints, and structured perturbations as a lightweight differentiable wrapper in front of the CNN, so that existing gradient-based robustness frameworks can operate on the deployed pipeline. For formal verification, data-dependent preprocessing such as per-window \(z\)-normalization introduces nonlinear operators that require verifier-specific abstractions. We therefore focus on attack-based robustness estimates and pipeline-checked validity by benchmarking robustness with the frameworks Foolbox and ART. 
Adversarial fine-tuning of the resulting CNN improves robust accuracy by up to 18.6\,\% without degrading clean accuracy. Finally, we extend robustness on time-series data beyond single windows to \emph{sequence-level robustness} for sliding-window classification, introduce adversarial sequences as counterexamples to a temporal robustness requirement over full scans, and observe attack-induced misclassifications that persist across adjacent windows.

\keywords{Adversarial robustness \and Time series classification \and Time series robustness \and Neural network safety \and Adversarial training.}
\end{abstract}

\section{Introduction}\label{sec:introduction}
CERN operates the world’s largest particle-accelerator complex, including the 27\,km Large Hadron Collider (LHC), which accelerates proton or heavy-ion beams in opposite directions and collides them in four major experiments. In 2026, the LHC will stop for three years (Long Shutdown 3) to be upgraded to the High-Luminosity LHC (HL-LHC), which will significantly increase the number of collisions and the demands on beam protection.

One of the upgrades are \emph{crystal collimators}~\cite{redaelli_crystal_2025} -- a novel collimation technology for heavy-ion beams, first tested on the LHC in 2015 and operational since 2023. These devices use bent crystals to more efficiently and accurately deflect stray particles (halo) away from the main beam into absorbers by channeling the particles through the structured crystal lattice. This is a highly critical task: without proper deflection, halo particles will degrade the accelerator through uncontrolled radiation damage, increase background noise in detectors, or even cause a quench (loss of superconductivity resulting in rapid heating) in one of the LHC magnets. Stray particles are only channeled by the crystal lattice within a small ``critical angle'', precise alignment of the crystal collimators is therefore critical. 
Machine-learning (ML) algorithms are increasingly used to support and automate this alignment: a convolutional neural network (CNN) classifies beam-loss monitor (BLM) time series windows during crystal rotations to assist human operators in finding channeling peaks that can then be optimized numerically~\cite{holzer_beam_2005,ricci_machine_2024}. Looking ahead to future machines (e.g., the Future Circular Collider \cite{benedikt_future_2025}), the push towards greater automation necessitates trusted automation where human operators cannot be the sole bottleneck. One of the safety concerns is the CNN's behavior under expected noisy conditions -- a robustness property that is thus imperative to evaluate and quantify.

Given that, the core problem we address is ensuring the \emph{adversarial robustness of this availability-critical time-series classifier under realistic perturbations}, such as background radiation, process-induced noise, and sensor fluctuations. Formally, we are dealing with properties of the following form: Given an input \(x\) \textit{(the input signal)}, a processing function \(\omega(x)\) \textit{(the deployed preprocessing pipeline)}, and a perturbation set \(\Delta(x)\) \textit{(the physically plausible noise)}, assess a local robustness property of the form \textit{``for all perturbations~\(\delta \in \Delta\), the CNN \(f(\cdot)\) predicts the same label for perturbed input \(x + \delta\)''}. The perturbation set is commonly defined with a vector-norm \(p\) and a threshold \(\varepsilon\) which yields the following problem formulation:

\[ \forall \delta \in \Delta(x), \,\text{s.t.} \quad \| \delta \|_p \leq \varepsilon: \quad \quad \arg\max f(\omega(x)) = \arg\max f(\omega(x+\delta)).\]

Adversarial examples \cite{szegedy_intriguing_2014,goodfellow_explaining_2015} are counterexamples to such adversarial robustness properties and can be used as a robustness measure over a population of samples. They have been demonstrated across several data domains~\cite{zugner_adversarial_2019,carlini_audio_2018,eykholt_robust_2018}, including time-series forecasting or classification ~\cite{carlini_audio_2018,li_multivariate_2020,mode_adversarial_2020,ding_black-box_2023,wu_small_2022,dix_measuring_2023}. Methods that search for adversarial examples are called adversarial attacks. However, existing adversarial attacks against robustness do not directly apply to our problem. Naive application of common adversarial attacks ignores specifics of time-series preprocessing, and structured threat models that go beyond standard \(L_p\)-balls are required to cover BLM data channel dependencies. Without careful consideration this leads to invalid robustness estimates. 

Furthermore, since the model operates on a sliding window, the temporal classification trace requires consideration. For this, we extend the concept of robustness from single windows to sequences over full scans. We introduce adversarial sequences as chains of consistent adversarial examples across consecutive windows, serving as counterexamples to operator requirements for stable classification during crystal orientation scans. This approach links to runtime monitoring of cyber-physical systems where properties are verified over sliding time horizons to ensure the stability of safety-critical decisions \cite{bartocci_specification-based_2018}.

Our study yields key insights: By reparameterizing preprocessing and the threat model as a differentiable layer of the CNN, we enable existing frameworks for adversarial robustness estimation to operate on the deployed processing pipeline. While the resulting end-to-end graph can be exported in standard formats (e.g., ONNX/VNNLIB used in the VNN-COMP \cite{brix_first_2023}), applying abstraction- or discrete optimization-based verification tools requires that all preprocessing operators are supported via algorithm-specific abstractions (we discuss this limitation in Section~\ref{sec:attack-framework}). Further, we improve robust accuracy by up to 18.6\,\% via adversarial fine-tuning without degrading accuracy in the absence of perturbations, demonstrating the usefulness of adversarial fine-tuning in environments where training data is sparse. We find that both preprocessing-awareness and correct modeling of the threat model are crucial to avoid severe misestimation of robustness \(L_p\)-ball assumptions. A proof-of-concept attack exposes persistent misclassifications across adjacent windows, highlighting challenges for future temporal robustness analysis in safety-critical ML systems (e.g., bounding runs of false positives that could lead to suboptimal alignment). These results indicate the need for formal method techniques in real-world deployments of ML in critical systems, with implications for automated control in large-scale infrastructures like future particle colliders \cite{benedikt_future_2025}.

The main contributions of this paper are:
\begin{itemize}
  \item We define a threat model and robustness measure for time-series classifiers that take data preprocessing into account, and apply it to a crystal alignment CNN at CERN. We extend this to robustness over entire sequences.
  \item Building on the idea of a parameterized input-transformation layer to express structured threat models (e.g., Semantify-NN \cite{mohapatra_towards_2020}), we design a lightweight wrapper for time-series pipelines that models differentiable preprocessing and structured perturbations, enabling pipeline-aware adversarial attacks.
  \item We evaluate the method on the deployed classifier with standard robustness frameworks (ART and Foolbox) and show that adversarial fine-tuning improves robustness while maintaining accuracy on clean data.
  \item Finally, we analyze adversarial sequences using a proof-of-concept attack method, demonstrating misclassifications that persist across consecutive time windows.
\end{itemize}

Section~\ref{sec:background} provides background information and the problem statement. Section~\ref{sec:threat-model} defines the threat model under the adversarial robustness framework and extends it to adversarial sequences. Section~\ref{sec:attack-framework} proposes a reparameterization framework for enabling gradient-based robustness analysis under time series preprocessing and our threat model, and describes a proof-of-concept algorithm for the search for adversarial sequences. An experimental evaluation of tools on adversarial example generation, adversarial training, and adversarial sequence generation is presented in Section~\ref{sec:experiments}. Section~\ref{sec:related-work} surveys related work. Finally, Section~\ref{sec:conclusion} concludes.

\section{Background and Problem Formulation}\label{sec:background}
\subsection{Crystal collimation for the HL-LHC program}

Crystal alignment is a critical procedure performed during the commissioning of crystal collimators in the LHC. The crystal collimators are part of the collimation system that protects superconducting magnets and experiments by intercepting the beam halo before it can cause damage. The crystal steers (or ``channels'') the full beam halo at its position into a particle absorber, visualized in Figure \ref{fig:crystal_collimation}. The goal of alignment is to identify the optimal angular orientation of the crystal lattice to ensure stable beam-halo redirection \cite{redaelli_crystal_2025}. To align the crystal, a goniometer rotates the crystal lattice through a range of angles and records the feedback of two beam loss monitors. The alignment decision is based on the resulting two-channel time series. These two channels provide the necessary evidence to verify successful alignment:
\begin{itemize}
    \item The Crystal BLM: Located near the crystal, measuring local beam losses. When channeling occurs, losses at this location typically decrease as particles are ``trapped'' and moved away.
    \item The Absorber BLM: Located downstream at the particle absorber. When channeling is successful, this channel shows a sharp peak in beam loss as the redirected halo hits the absorber.
\end{itemize}

\begin{figure}[t]
    \centering
    \includegraphics[width=0.5\linewidth]{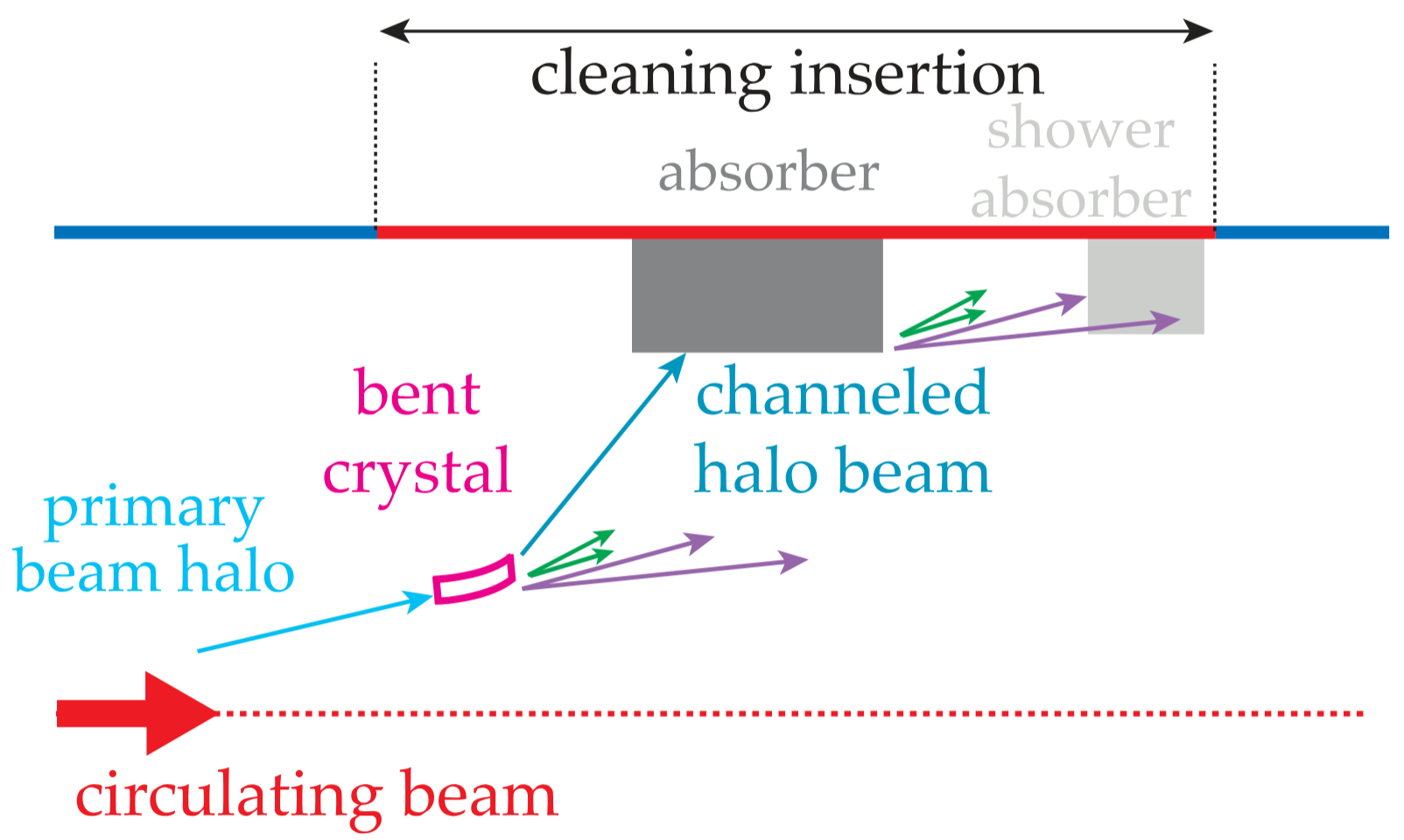}
    \caption{The principle behind the crystal collimation scheme (visualized for betatron cleaning
insertion), showing how the halo particles of the circulating beam are channeled onto the absorber upon correct crystal orientation \cite{redaelli_crystal_2025}.}
    \label{fig:crystal_collimation}
\end{figure}
Alignment and re-optimization are typically performed during commissioning or dedicated machine-development periods with low-intensity pilot beams, where the operational risk is minimized \cite{redaelli_crystal_2025}.
Because dedicated beam time is scarce and valuable, there has been sustained effort to streamline and automate alignment procedures.

\subsection{NN-based classification for crystal-collimator alignment}\label{subsec:nn-background}

To speed up alignment, a semi-automated tool set for BLM-based optimization of crystal orientation is used by LHC collimation experts. In this study, we consider a CNN-based BLM time-series classification during crystal rotations developed by Ricci et al. \cite{ricci_machine_2024}. Their approach classifies BLM signal windows into three possible classes: \textit{no channeling}, \textit{partial well}, and \textit{channeling}. The class \textit{channeling} corresponds to a detection of optimal alignment effects within the window, while \textit{no channeling} corresponds to a lack thereof. The \textit{partial well} class corresponds to rotations where skew planes within the crystal produce a shallower, symmetry-induced trapping effect (which is operationally undesirable). Ricci et al. designed and trained a CNN for classifying sliding windows of the two-channel BLM signal using a dataset comprising 1689 instances labeled from hundreds of angular scans by collimation experts. 

\begin{figure}[t]
    \centering
    \includegraphics[width=0.8\linewidth]{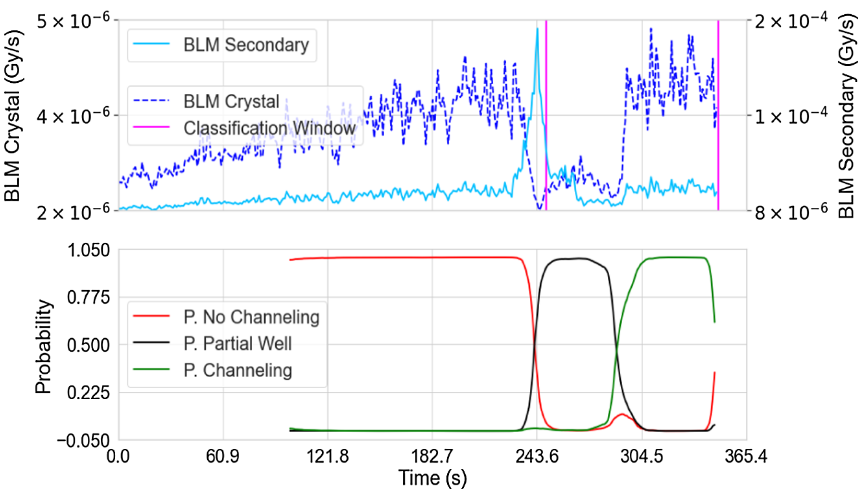}
    \caption{On-the-fly classification during a crystal rotation. At time instant \(i\), the classifier takes the last \(W\) samples (up to \(i\)) of the highlighted window at the right. Top: crystal BLM (dark blue) and secondary BLM (light blue) as a function of time and radiation dose detection; Bottom: predicted class probabilities per window. }
    \label{fig:collimator_window_classification}
\end{figure}

\begin{figure}[t]
    \centering
    \includegraphics[width=\linewidth]{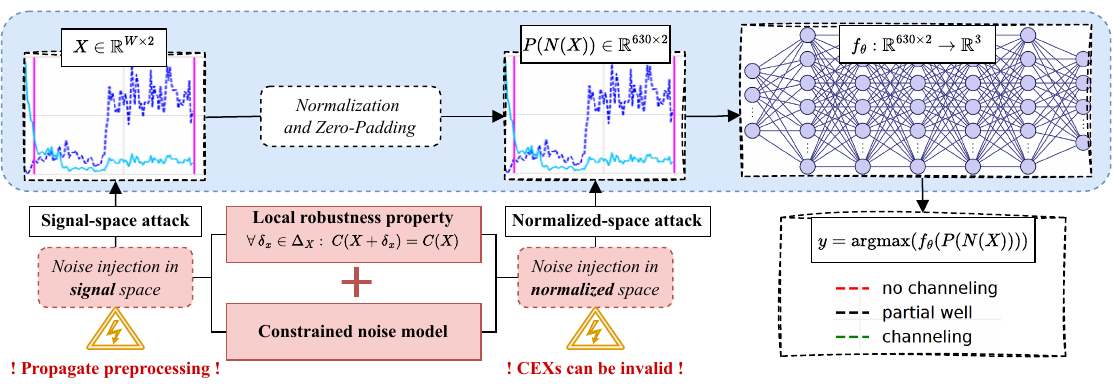}
    \caption{Conceptual overview of the deployed classification pipeline [Section~\ref{subsec:nn-background}] and the robustness problem studied in this paper [Section~\ref{subsec:perturbation}, Section~\ref{sec:threat-model}]. The local robustness property is defined over structured signal-space perturbations \(\delta \in \Delta(X)\) of the input \(X\). Adversarial analysis can target either the signal space (sound but requiring preprocessing to be considered [Section~\ref{sec:attack-framework}]) or the normalized input space (where naive attacks may violate the threat model).}
    \label{fig:cc_pipeline}
\end{figure}

\begin{wraptable}[10]{r}{6.1cm}
\centering
\vspace{-1.5cm}
\caption{CNN architecture (1D CNN) and output shapes \cite{ricci_machine_2024}.}
\label{tab:network_architecture}
\begin{tabular}{ll}
\toprule
Layer (kernel/stride)\ & Output \\
\midrule
Input \((630,2)\) & \((630,2)\) \\
Conv1D \(256\) ch. & \((630,256)\) \\
BatchNorm/ReLU/Dropout & \((630,256)\) \\
Conv1D \(160\) ch. & \((630,160)\) \\
BatchNorm/ReLU/Dropout & \((630,160)\) \\
Global Average Pooling & \((160)\) \\
Dense (logits) & \((3)\) \\
\bottomrule
\end{tabular}
\end{wraptable}

\paragraph{Operational use.}
To find the correct alignment orientation, the collimation expert performs a full scan of the two BLM feedback signals over the possible range of crystal rotation. The feedback from the BLM signals during the rotation can be captured as a two-channel (crystal and secondary BLMs) time series \(X\in\mathbb{R}^{L\times 2}\) of length \(L\) as visualized in Figure \ref{fig:collimator_window_classification} (top).
This time series is analyzed with the CNN using a sliding window approach: Starting at the first data point, windows of size \(W\) are defined as \(X_i\in\mathbb{R}^{W\times 2}\). Each window \(X_i\) is classified by the CNN yielding a plot of class probabilities over time/rotation where each point corresponds to a window as visualized in Figure \ref{fig:collimator_window_classification} (bottom). Human operators then identify the alignment region from the \textit{channeling} probability curve. 
The precise alignment within \(X_i\) is then optimized with numerical methods (not discussed here).

\paragraph{Input and preprocessing.}
Each window \(X_i\in\mathbb{R}^{W\times 2}\) is preprocessed before it is classified by the CNN. Per-sample, per-channel \(z\)-normalization \(N(\cdot)\) is performed as a typical measure for performance and robustness against mean-shifts:
\[N(X_{i}) := \frac{X_{i}-\mu(X_{i})}{\sigma(X_{i})},\]
where \(\mu(\cdot)\) and \(\sigma(\cdot)\) are the sample mean and standard deviation over the \(W\) samples.
Then left-zero padding \(P(\cdot)\) to length \(p\) is added to be invariant to different window sizes, where we set \(p:=630\) based on the maximal window size, i.e.\ \(P(N(X_i))\in\mathbb{R}^{630\times 2}\) is obtained by prefixing \(p-W\) zeros to each channel of \(N(X_i)\).

\paragraph{CNN architecture.}
CNN maps the preprocessed data to three logits, one for each of the three possible classes. The CNN is thus of type \(f_\theta: \mathbb{R}^{630\times 2} \rightarrow \mathbb{R}^3\) where \(\theta\) are the learned parameters. Table~\ref{tab:network_architecture} describes the internal structure of the CNN. 

The final classifier is given as \(C: \mathbb{R}^{W\times 2} \rightarrow \mathbb{R}^3\) with 

\[C(X) := \mathrm{softmax}( f_\theta(P(N(X)))) \quad \text{for window } X.\]

Figure~\ref{fig:cc_pipeline} provides a high-level overview of the deployed classification pipeline, the robustness property of interest, and how it connects with a real-world threat model and adversarial attacks. Formal definitions of the robustness problem, the threat model, and adversarial attacks are given in Sections 3 and 4.

\subsection{Adversarial perturbations during crystal alignment}\label{subsec:perturbation}
As stated in the introduction, we are interested in robustness under perturbations. We outline the perturbation characteristics we aim to capture and the threats they can pose to the crystal alignment.

BLM windows exhibit two main sources of variation: (i) electronic readout and digitization noise (approximately zero-mean, iid), and (ii) beam-related fluctuations that induce correlations across channels (crystal BLM and secondary BLM respond to the same beam conditions). 
Below, we describe how these variations manifest as perturbations on windows and sequences.

\paragraph{Perturbation for a single window.} For a window \(X_i\), the perturbation \(\delta\) must (a) remain small relative to the scale of each channel \(c \in \{0,1\}\) (\(0:\) crystal BLM, \(1:\) secondary BLM), and (b) combine a \emph{common-mode} component (shared across channels) with \emph{channel-specific} noise. A convenient decomposition is
 
\[\delta_c = \underbrace{\delta^{\mathrm{com}}}_{\text{shared across channels}} +\underbrace{\delta^{\mathrm{ind}}_c}_{\text{per-channel}},\]

with both terms bounded in amplitude. For the normalized data, we additionally must respect the zero-padding. Each window is left-zero-padded to a fixed length \(p\) prior to classification by the CNN, thus perturbations must: (c) respect padding (i.e., be exactly zero on padded indices). Perturbations that alter the padded region or unstructured patterns are easy to flag as artifacts and are not operationally relevant.

\paragraph{Perturbation for a classification sequence.}
Operators performing the alignment procedure act on trends over a sequence of windows. Thus, sequence-level perturbations should (a) evolve smoothly over the scan (small changes between consecutive windows in the model probability outputs), (b) remain consistent over the scan (consecutive windows should share perturbation values for common time steps). Rapidly oscillating patterns or spurious adversarial perturbations are unlikely to be considered plausible.

The main risk modes during alignment are: (i) \emph{false channeling} near a suboptimal orientation (prematurely stopping the scan, collimation performance affected), and (ii) \emph{missed channeling} at the true optimum (unnecessary re-scans, efficiency loss). Operators expect perturbations that are padding-aware, small relative to scale, and channel-correlated as well as time-coherent behavior.

We therefore aim at an adversarial formulation that (1) constrains perturbations by small per-channel amplitudes while explicitly allowing a \emph{channel-correlated} plus \emph{channel-specific} structure, and (2) extends from single windows to sequences via a smoothness notion aligned with operator plausibility. Section~\ref{sec:threat-model} formalizes this as our threat model and serves as a basis for the adversarial attack methods in Section~\ref{sec:attack-framework}.

\section{An Operationally Plausible Threat Model}\label{sec:threat-model}
This section incrementally derives our threat model. Let the input signal be \(X \in \mathbb{R}^{L \times 2}\). We first model the electronic readout and digitization noise as per-channel iid Gaussian noise. This can be simplified to a perturbation under an \(L_\infty\)-norm, with per-channel perturbation budgets \(\varepsilon^\mathrm{ind} \in \mathbb{R}^2_{\geq 0}\) derived from the Gaussians' standard deviations (e.g., via a multiple for high-probability coverage). Separate per-channel budgets are necessary as the variance between channels can vary. 
Beam-related fluctuations are modeled as a correlated component shared across channels but scaled to each channel's magnitude. Let \(\delta^\mathrm{com} \in \mathbb{R}^{L}\) denote the normalized common-mode shape with \(\|\delta^\mathrm{com}\|_\infty\leq 1\), and let \(\delta^\mathrm{ind} \in \mathbb{R}^{L \times 2}\) denote the normalized independent components with \(\|\delta^\mathrm{ind}_{\cdot,c}\|_\infty\leq 1\). The adversarial perturbation \(\delta \in \mathbb{R}^{L \times 2}\) is structured as:
\[\delta_{t,c}\ = \varepsilon^\mathrm{com}_{c} \cdot \delta^\mathrm{com}_t + \varepsilon^\mathrm{ind}_{c} \cdot \delta^\mathrm{ind}_{t,c}, \text{ with} \quad c \in \{0,1\},\]
which ensures the total per-channel bound \(\|\delta_{\cdot,c}\|_\infty \leq \varepsilon^\mathrm{com}_{c} + \varepsilon^\mathrm{ind}_{c}\).

\subsubsection{Adversarial examples.}
For a window \(X_i\in\mathbb{R}^{W\times 2}\), let \(\Delta(X_i)\subseteq\mathbb{R}^{W\times 2}\) denote the set of admissible signal-space perturbations under our threat model:
\[\Delta(X_i) = \{ \delta \in \mathbb{R}^{W \times 2} \;|\; \delta \text{ is of the structured form } (\delta_{t,c}) \text{ introduced above}\}.\]
An adversarial example of \(X_i\) is any 
\[
\hat X_i=X_i+\delta \quad\text{with} \quad \delta\in\Delta(X_i), \, \text{s.t.} \quad \hat y(\hat X_i)\neq \hat y(X_i),
\]
where \(\hat y(\cdot)\) is the predicted label of the deployed pipeline (Fig.~\ref{fig:cc_pipeline}).

\subsubsection{Adversarial sequences.} 
For a full scan \(X \in \mathbb{R}^{L \times 2}\), let a sequence \(\pi_X\) of individual window classifications be:
\[\pi_X =\pi_{X_0} = (C(X_0), C(X_1),\dots, C(X_n)),\] 
where each \(X_{i+1}\) is \(X_i\) updated with one new data point in the time series, and \(C(\cdot)\) denotes the (post-softmax) class-probability output. 
Sequences of adversarial examples (dubbed \emph{adversarial sequences}) are now
\[\hat \pi_{X_{i,k}} = (C(\hat X_{i}), C(\hat X_{i+1}),\dots, C(\hat X_{k})),\] 
starting at index \(i\) of the unperturbed sequence. Each \(\hat X_{j} = X_j + \delta_j\) with \(\delta_j \in \Delta(X_j)\), and consecutive perturbations are \emph{consistent}: \(\hat X_{i+1}\) differs from \(\hat X_i\) only in the new data point and its perturbation. A \emph{maximal} adversarial sequence is a sequence of adversarial examples that cannot be extended while preserving adversariality and consistency.
Maximal adversarial sequences of length one are spurious.

State-of-the-art gradient-descent-based adversarial attacks are sound but incomplete \cite{uesato_adversarial_2018}. This is because they search in the gradient-vicinity of the input instead of exhaustively exploring every possible value. Complete verification methods such as interval propagation need to prove the absence of an adversarial example in the neighborhood, which is an NP-complete problem \cite{katz_reluplex_2017}. Certifying that an adversarial sequence is maximal reduces to this problem (Appendix A). Thus, for practical purposes, we study \textit{maximal adversarial sequences under a given attack \(\varphi\)}, which is an adversarial sequence that cannot be extended by attack \(\varphi\).

An adversarial sequence is \emph{smooth} if for each two subsequent adversarial examples \(\hat X_j, \hat X_{j+1}\) it holds:
\[\|C(\hat X_{j}) - C(\hat X_{j+1})\|_1 < \kappa, \quad \text{for some small } \kappa \in \mathbb{R}_{\geq 0}.\]
This enforces a \(\kappa\)-smooth evolution over the scan.

\subsubsection{Threat model in the normalized space.} We can similarly define adversarial noise, adversarial examples, and adversarial sequences on the normalized space (corresponding to the normalized-space attack path in Figure \ref{fig:cc_pipeline}) where the input is \(Z_i := P(N(X_i)) \in \mathbb{R}^{630 \times 2}\) and the classifier is the CNN \(f_\theta: \mathbb{R}^{630 \times 2} \rightarrow \mathbb{R}^3\). The main differences are:

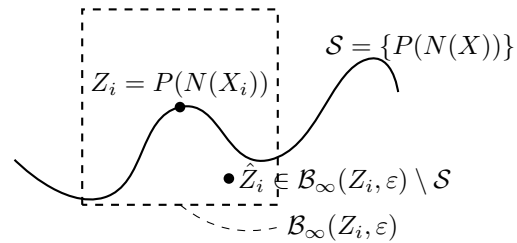
\begin{wrapfigure}{r}{6.1cm}
\vspace{-1cm}
    \centering
    \begin{tikzpicture}[scale=1.0]
  \draw[thick]
    plot[smooth, tension=0.9] coordinates {
      (1.4,2.1) (2.6,1.6) (3.6,2.8) (4.8,2.1) (6.0,3.4) (6.5,3)
    };
  \node[align=left] at (6.8,3.6) {$\mathcal{S}=\{P(N(X))\}$};

  \coordinate (Zi) at (3.6,2.8);
  \fill (Zi) circle (2pt);
  \node[above] at (Zi) {$Z_i=P(N(X_i))$};

  \def\eps{1.3} 
  \draw[thick, dashed]
    ($(Zi)+(-\eps,-\eps)$) rectangle ($(Zi)+(\eps,\eps)$);
  \node[below right] at ($(Zi)+(\eps,-\eps)$)
    {$\mathcal{B}_\infty(Z_i,\varepsilon)$};

   \draw[dashed]  (3.6,2.8-\eps) edge [bend right=25]  (4.9,2.5-\eps);

  \coordinate (Zadv) at (4.25,1.85);
  \fill (Zadv) circle (2pt);
  \node[right] at (Zadv)
    {$\hat Z_i\in \mathcal{B}_\infty(Z_i,\varepsilon)\setminus \mathcal{S}$};
\end{tikzpicture}
    \caption{Infeasible adversarial example \(\hat Z_i\) in the normalized space.}
    \label{fig:ifeasible}
\end{wrapfigure}

\begin{itemize}
    \item a simplified notion of adversarial noise, as one-dimensional perturbation limits \(\varepsilon_\mathrm{\{ind,com\}}\) are sufficient (no scaling by channel-variance).
    \item incorporation of zero-padding.
\end{itemize}

However, adversarial examples in the normalized space may be unsound in the sense that they may violate the intended threat feasibility constraints. Informally, normalization and padding enforce constraints on the feasible set of adversarial examples in the signal space that are not captured by a simple \(L_\infty\) ball. Figure \ref{fig:ifeasible} visualizes this.

\begin{proposition}[Infeasibility due to per-window normalization constraints]\label{prop:ifeasible}
Let \(Z_i := P(N(X_i))\) and let \(\Delta(X_i)\) be the signal-space perturbation set induced by our threat model. The set of feasible normalized inputs is
\[
\mathcal{Z}(X_i) := \{\, P(N(X_i+\delta)) \mid \delta \in \Delta(X_i) \,\}.
\]
Let
\[
\mathcal{B}_\infty(Z_i,\varepsilon) := \{\, \hat Z_i \mid \|\hat Z_i-Z_i\|_\infty \le \varepsilon \,\}\quad \text{for some } \varepsilon > 0
\]
be an adversarial example on \(Z_i\) within the \(L_\infty\)-ball.
Then: for any window \(X_i\) with non-zero per-channel variance and any \(\varepsilon>0\), there exists \(\hat Z\in \mathcal{B}_\infty(Z_i,\varepsilon)\) such that \(\hat Z \neq P(N(X'))\) for all \(X'\); hence \(\hat Z\notin \mathcal{Z}(X_i)\).
\end{proposition}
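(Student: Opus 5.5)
The plan is to exhibit an explicit point of \(\mathcal{B}_\infty(Z_i,\varepsilon)\) that violates an invariant satisfied by every output of \(P\circ N\), and then conclude infeasibility. First I will identify the invariants of the range of \(P(N(\cdot))\). For any window \(X'\in\mathbb{R}^{W\times 2}\) with non-zero per-channel variance, each channel of \(N(X')\) has sample mean exactly \(0\) and sample standard deviation exactly \(1\). That is, \(\sum_{t} N(X')_{t,c}=0\) and \(\sum_t N(X')_{t,c}^2 = W\) (or \(W-1\), matching the convention for \(\sigma\)). Moreover, \(P\) places zeros on the first \(p-W\) indices of each channel. So every element of \(\{P(N(X'))\}\) satisfies three conditions: zeros on the padded indices, zero mean on the last \(W\) entries of each channel, and unit standard deviation on those entries. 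If some channel of \(X'\) is constant, \(N\) is undefined. The natural convention is that such \(X'\) has no image, so it contributes nothing and the claim only concerns the range over the defined inputs.

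Next, given \(Z_i\) (which itself satisfies these invariants, by the hypothesis of non-zero per-channel variance), I will construct \(\hat Z := Z_i + \eta\,E\) with \(0<\eta\le\varepsilon\). Here \(E\) is the indicator of a single non-padded entry, say the last time step of channel \(0\). Then \(\|\hat Z - Z_i\|_\infty = \eta\le\varepsilon\), so \(\hat Z\in\mathcal{B}_\infty(Z_i,\varepsilon)\). The mean of channel \(0\) over the last \(W\) entries becomes \(\eta/W\neq 0\), which violates the zero-mean invariant. Hence \(\hat Z\neq P(N(X'))\) for every \(X'\). A robust alternative that also covers \(W=p\), and works even if one prefers to perturb the padding instead, is to set a padded entry to \(\eta\) when \(p>W\); this violates the padding invariant. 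Either choice works, and I will state the mean-shift version as the primary one since it is independent of \(W\).

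Finally, since \(\mathcal{Z}(X_i)\subseteq\{P(N(X'))\mid X'\}\) (take \(X'=X_i+\delta\)), we get \(\hat Z\notin\mathcal{Z}(X_i)\). The only mildly delicate step is making the invariants precise. This means fixing the definition of \(\sigma\) (population versus sample), which only changes the constant in the unit-variance identity and is irrelevant because I use the mean invariant. It also means handling windows \(X_i+\delta\) whose perturbed channel becomes constant, where \(N\) is undefined; these simply do not contribute to \(\mathcal{Z}(X_i)\), so the argument is unaffected. I expect no real obstacle beyond stating these conventions cleanly.
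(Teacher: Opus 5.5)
Your proposal is correct and is essentially the paper's proof sketch: you perturb a single unpadded entry of \(Z_i\) by \(\eta\le\varepsilon\), so that channel's unpadded part loses zero mean, which no \(P(N(X'))\) can have, and you conclude via \(\mathcal{Z}(X_i)\subseteq\{P(N(X'))\}\). Your version is somewhat more careful than the sketch, since you fix \(\eta\) explicitly and note that the mean invariant does not depend on the \(\sigma\) convention. The same argument also covers \(X'\) of a different window length \(W'\le p\): every \(P(N(X'))\) has channel sum \(0\) over all \(p\) entries, while \(\hat Z\) has sum \(\eta\neq 0\).
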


\begin{proof}[sketch]
For windows with \(\sigma(X_{i,c})>0\), z-normalization ensures that for each channel \(c\) the unpadded normalized vector \(N(X_{i,c})\) has sample mean \(0\) and sample standard deviation \(1\).
Padding only prefixes zeros and does not change these per-window statistics on the unpadded part.
Fix any \(\varepsilon>0\) and modify a single unpadded entry of \(Z_i\) in one channel by \(<\varepsilon\) to obtain \(\hat Z_i\) such that the unpadded part no longer has mean \(0\) (or standard deviation \(1\)).
Then \(\|\hat Z_i-Z_i\|_\infty<\varepsilon\), hence \(\hat Z_i\in\mathcal{B}_\infty(Z_i,\varepsilon)\).
However, \(\hat Z_i\) cannot equal \(P(N(X'))\) for any \(X'\), since \(N(\cdot)\) enforces mean \(0\) and standard deviation \(1\) by definition. Thus \(\hat Z_i\not\in\mathcal{Z}(X_i)\).
\end{proof}

This mismatch implies that perturbing \(Z_i\) directly can violate the intended signal-space constraints; in Section~\ref{sec:attack-framework} we therefore reparameterize the attack in signal space which allows passing through the deployed preprocessing. This issue is not specific to our setting: per-window \(z\)-normalization and padding are standard in time-series classification \cite{bagnall_uea_2018,dau_ucr_2019,wang_time_2017}, so normalized-space \(L_p\) threat models can misrepresent feasible inputs whenever preprocessing depends on the instance.

\section{Threat-Model–Aware Attack Framework}\label{sec:attack-framework}
Figure~\ref{fig:cc_pipeline} summarizes the two natural ways one might apply standard adversarial-attack tooling to our deployed pipeline. A naive application follows the default assumption of frameworks such as ART \cite{nicolae_adversarial_2019} and Foolbox \cite{rauber_foolbox_2020}: one attacks the neural network on its \emph{model input} using an \(L_p\)-bounded perturbation set. The model input is the preprocessed window
\(Z_i := P(N(X_i)) \in \mathbb{R}^{630\times 2}\), and a default robustness evaluation would therefore run (e.g., PGD) directly on \(f_\theta\) with a constraint of the form \(\|\hat Z_i - Z_i\|_p \le \varepsilon\).
This corresponds exactly to the \emph{normalized-space attack} path in Fig.~\ref{fig:cc_pipeline}.

However, Figure~\ref{fig:cc_pipeline} also highlights why this is problematic for our setting. First, our operational threat model is defined in \emph{signal space} and is \emph{structured} (common-mode + per-channel components with per-channel budgets), whereas off-the-shelf tools typically assume a single homogeneous radius \(\varepsilon\). Second, attacking \(Z_i\) directly can yield perturbations that are \emph{infeasible} under the deployed preprocessing: per-window normalization and zero-padding restrict the set of valid preprocessed inputs as shown in Proposition \ref{prop:ifeasible} (and Figure \ref{fig:ifeasible}).
Attempting to ``fix'' candidates post hoc (e.g., by renormalizing or masking padded indices) can destroy adversariality, leading to misleading robustness estimates.

Consequently, we need \emph{signal-space attacks} that optimize over perturbations on \(X_i\) while propagating gradients through the deployed preprocessing (normalization and padding). Existing attack frameworks do not natively support our combination of (i)\emph{ data-dependent preprocessing} and (ii) a structured constraints as threat models and would require code-modifications or adapter-definitions. 

We therefore \emph{instantiate a standard reparameterization/wrapper pattern} for our deployed time-series pipeline: composition of a parameterized input transformation with the classifier (as in EOT-style robustness evaluation \cite{athalye_synthesizing_2018} and semantic/contextual perturbation frameworks such as Semantify-NN and DeepCert \cite{mohapatra_towards_2020,paterson_deepcert_2021}).
Concretely, we encode (a) the structured common\(+\)independent noise model and (b) the preprocessing into an additional differentiable CNN layer, so that standard \(L_\infty\) attacks operate in a \emph{tool-agnostic} way on a bounded auxiliary variable while the resulting perturbations are valid for the deployed pipeline.
Additionally, we show in Section~\ref{subsec:sequence-construction} that the same construction can be used as a building block to generate adversarial sequences.

\paragraph{Scope and limitation.}
The wrapper is \emph{tool-agnostic} in the sense that it targets gradient-based analysis: any attack/optimizer that differentiates through the computation graph can be applied without modifying the attack library.
Formal verification tools are typically more restrictive: per-window \(z\)-normalization computes statistics from the input and contains nonlinear operators (variance, \(\sqrt{\cdot}\), division) that typically require verifier-specific abstract transformers for abstract bound propagation or discrete optimization (MIP, SAT). Our wrapper does not by itself yield such a verifier-agnostic encoding of the full deployed preprocessing when dynamic normalization is included.

\subsection{Attacks over constrained noise}\label{subsec:attacks-noise}
In this subsection, we describe the reparameterization for enforcing the structured noise model from Section~\ref{sec:threat-model} in a generic input space; Section~\ref{subsec:preprocessing-attacks} then composes it with the deployed preprocessing.
For a normalized input \(z \in \mathbb{R}^{630 \times 2}\), valid perturbations combine a channel-correlated (common-mode) component and channel-specific (independent) components. We use the normalized variables directly as attack variables bounded in \([-1,1]\):
\[\delta_{t,c}=\underbrace{\varepsilon^\mathrm{com}_c \cdot u^\mathrm{com}_t}_{\text{shared across channels}}
+\underbrace{\varepsilon^\mathrm{ind}_c \cdot u^\mathrm{ind}_{t,c}}_{\text{per-channel}},
\quad  \text{where }\|u_{\{com,ind\}}\|_{\infty}\le 1,\]
where \(\varepsilon^\mathrm{com}_c, \varepsilon^\mathrm{ind}_c \geq 0\) define per-channel \(L_\infty\) budgets, yielding:
\[\|\delta_{\cdot,c}\|_\infty \leq \varepsilon^\mathrm{com}_c + \varepsilon^\mathrm{ind}_c.\]
We stack the variables as \(u = [u^\mathrm{com}, u^\mathrm{ind}] \in \mathbb{R}^{630 \times 3}\) with \(\|u\|_\infty \leq 1\). This normalization facilitates compatibility with the assumption of having one global \(L_\infty\) attack variable.

To integrate this specification with existing tools (rather than attacking \(z\) directly), we reparameterize the CNN model to take \(u_{com}, u_{ind}\) as input and to optimize over their gradient.

\subsubsection{Reparameterization layer.}
Given the CNN \(f_\theta: \mathbb{R}^{630 \times 2} \rightarrow \mathbb{R}^3\), we reparameterize it under attack perturbation variable \(u \in \mathbb{R}^{630 \times (2+1)}\) by adding a \emph{reparameterization layer} \(U_{z,\varepsilon^\mathrm{com},\varepsilon^\mathrm{ind}}: \mathbb{R}^{630 \times (2+1)} \rightarrow \mathbb{R}^{630 \times 2}\).
\[U_{z,\varepsilon^\mathrm{com},\varepsilon^\mathrm{ind}}(u) = z + \varepsilon^\mathrm{com} \cdot u_\mathrm{com} + \varepsilon^\mathrm{ind} \cdot u_\mathrm{ind}.\]

The wrapped model \(\tilde f(u) = f_\theta(U(u))\) enables standard \(L_\infty\) attacks with on \(u\), enforcing the structured noise on \(z\).

\begin{remark}
   Note that while the added layer \(U(\cdot)\) has to be initialized individually for each input \(z\) and perturbation budget \(\varepsilon\), it can be implemented to operate on sets of inputs by adding an additional dimension which improves computational efficiency. Further, the functional composition of reparameterization-layer construction and adversarial attack can be seen as a generalized algorithm which takes as input (a) the problem instance, (b) the robustness property, and (c) the operational constraints. This composition can be constructed fully algorithmically and does not represent a computational restriction. 
\end{remark}

\subsection{Preprocessing-aware attacks}\label{subsec:preprocessing-attacks}

As shown in Proposition~\ref{prop:ifeasible} and illustrated in Fig.~\ref{fig:ifeasible}, per-window \(z\)-normalization restricts the set of valid preprocessed inputs. Directly perturbing the normalized input \(z\) may yield infeasible \(\hat Z\), while post-hoc renormalization can remove adversariality. Similarly, perturbations in zero-padded regions are invalid, and subsequent masking may remove adversariality.

To address these issues, we perform attacks in the original signal space, incorporating normalization and padding into the computational graph. Define the normalization as \(N(x)=\frac{x-\mu(x)}{\sigma(x)}\) and the padding mask application as \(P(x)=x \cdot m(x_0)\), where \(m(x_0)\) is the zero-padding mask derived from the original input \(x_0\). The perturbation \(\delta x\) is then optimized in the signal space, with gradients propagating through preprocessing, yielding \(f_\theta(P(N(x_0+\delta x)))\).

However, attacks in the signal space must account for \emph{heterogeneous channel variances}: a uniform global bound disproportionately impacts low-variance channels. Thus, normalization-aware attacks require per-channel budgets.

We adapt the reparameterization from Section~\ref{subsec:attacks-noise} to this setting. The reparameterization layer now takes the original input \(x_0\) instead of \(z\), and computes preprocessing via \(\sigma(\cdot)\), \(\mu(\cdot)\), and \(m(\cdot)\), integrating the full classification pipeline into the computational graph for gradient flow. The attack variable is scaled by \(\sigma(x_0)\) to define per-channel budgets in specified in the original signal scale, while the optimizer uses a uniformly bounded variable \(u=(u_\mathrm{com},u_\mathrm{ind})\): 

\[U_{x_0,\varepsilon^\mathrm{com},\varepsilon^\mathrm{ind}}(u) = P(N(\hat x)) = \frac{\hat x - \mu(\hat x)}{\sigma(\hat x)} \cdot m(x_0),  \quad \text{where}\]

\[\hat x = x_0 + \sigma(x_0) \cdot \Big( \varepsilon^\mathrm{com} \cdot u_\mathrm{com} + \varepsilon^\mathrm{ind} \cdot u_\mathrm{ind} \Big).\]

\subsection{Constructing adversarial sequences}\label{subsec:sequence-construction}
We now instantiate the adversarial-sequence notion of Section~\ref{sec:threat-model} by constructing counterexamples to sequence-level robustness over a scan using standard attack tools.
By intuition, adversarial sequences should be found around adversarial examples whenever the adversarial example window already covers the signal parts of highest gradient magnitude. If a new sample is added to the input, its effect on the activation is bounded by the single-point gradient and the gradient effect of window-shifting. If the adversarial examples are caused by local structures of the CNN that are not shift-invariant, shifting can cause the adversarial example to disappear. If they are not, smooth adversarial sequences should emerge naturally from single adversarial examples.

As a proof-of-concept we describe a method for creating adversarial sequences to test this hypothesis. Let \(X \in \mathbb{R}^{L\times 2}\) be a BLM feedback and \(\pi_X\) the corresponding classification sequence and \([a,b] \in \mathbb{R}^{b-a}\) a sequence of classification indices for which the classification should be flipped to a different class. Then we create an adversarial sequence by optimizing a perturbation on \(X\) that jointly maximizes the number of misclassifications for windows \(X_a,\dots,X_b\). This is done by defining a new reparameterization layer \(V_{X}:\mathbb{R}^{L \times 2}\) which can be attacked with Foolbox or ART.

\begin{remark}
    This algorithm maximizes misclassifications but does not guarantee maximality or optimal smoothness; future work could incorporate smoothness constraints directly. In Section~\ref{sec:experiments} we show that smooth adversarial sequences still emerge when applying the algorithm. 
\end{remark}

\section{Threat Model Robustness: Attacks, Defense, Sequences}\label{sec:experiments}
In this section, we (i) measure robust accuracy (RA) on our threat model and compare it to the RA of threat models only partially applying the preprocessing-awareness and noise-correlation methods introduced in Section~\ref{sec:attack-framework}. Then, we (ii) evaluate the effectiveness of adversarial fine-tuning and compare the RA under three perturbation radii. Finally, we (iii) apply the adversarial sequence attack on a BLM sequence as a proof-of-concept. 

\paragraph{Evaluation target.}
We evaluate robustness of the deployed prediction
\[
\hat y(x) := \arg\max C(x) ,
\]
where \(C(x)\) is the deployed pipeline from Section~\ref{subsec:nn-background}.

\paragraph{Attack-based robust accuracy.}
Let \(\mathcal{D}_{\mathrm{test}}\subseteq\mathcal{X}\) be the set of test inputs and \(\Delta(x)\) the admissible perturbation set for \(x\) as defined in Section~\ref{sec:threat-model} (instantiated via \((\varepsilon^\mathrm{com},\varepsilon^\mathrm{ind})\)).
Given an \emph{untargeted} attack procedure \(\mathcal{A}\) that either returns a candidate adversarial example \(\hat x \in x+\Delta(x)\) or returns \(\bot\) (failure), we define
\[
\mathrm{RA}_{\mathcal{A}}
:= \frac{1}{|\mathcal{D}_{\mathrm{test}}|}
\sum_{x\in \mathcal{D}_{\mathrm{test}}} \mathbf{1} \quad \mathrm{iff} \quad \mathcal{A}(x)=\bot.
\]
Intuitively, \(\mathrm{RA}_{\mathcal{A}}\) is the fraction of test inputs for which the attack does not find any admissible \(\hat x \in x+\Delta(x)\) with \(\hat y(\hat x)\neq \hat y(x)\).
Since \(\mathcal{A}\) is incomplete, \(\mathrm{RA}_{\mathcal{A}}\) is a non-certified (optimistic) upper-bound estimate of robustness.

To isolate the impact of missing preprocessing-awareness or threat-model structure, we run attacks under multiple analysis configurations.
Depending on the configuration, the attack optimizes a simplified problem (e.g., without padding mask or with frozen normalization statistics) and therefore reports success in its own optimization space.
We thus report two metrics:
(i) \emph{tool-reported} robust accuracy \(\mathrm{RA}^{\mathrm{tool}}\), as returned by the attack in the configuration's optimization space; and
(ii) \emph{pipeline-checked} robust accuracy \(\mathrm{RA}^{\mathrm{pipe}}\), obtained by reconstructing each candidate into signal space and counting it as adversarial only if it is admissible under the true threat model \(\Delta(\cdot)\) and \(\hat y(\hat x)\neq \hat y(x)\).
In the remainder of this section, we instantiate \(\mathrm{RA}_{\mathcal{A}}\) either as \(\mathrm{RA}^{\mathrm{tool}}\) (evaluated in the configuration's optimization space) or as \(\mathrm{RA}^{\mathrm{pipe}}\) (pipeline-checked).

\subsection{Experimental setup}
\paragraph{Hardware.}
All experiments ran on CERN SWAN (AlmaLinux~9, GCC~11) using 4 CPU cores of an AMD EPYC~7313, 32\,GB RAM, and a 10\,GB A100 MIG slice (CUDA~12.5); Python~3.11.9.

\paragraph{Tooling.}
For adversarial example generation we use the two largest adversarial robustness frameworks for Python: the Adversarial Robustness Toolbox (ART) \cite{nicolae_adversarial_2019} and Foolbox \cite{rauber_foolbox_2020}.

\subsection{Adversarial attacks}
Our goal is to evaluate the classifier under perturbations that are both (i) compatible with the deployed preprocessing (per-sample, per-channel \(z\)-normalization and left padding) and (ii) consistent with the threat model in Section~\ref{sec:threat-model}.

Unless stated otherwise, we use untargeted \(L_\infty\) projected gradient descent (PGD) as the main attack (Foolbox and ART backend) on the reparameterized input \(u\), with a radius \(\varepsilon=1\) (full structured budget) and a short sweep over smaller radii for curves. For baselines that perturb \(z\) directly, we match budgets by setting \(\varepsilon_\mathrm{glob}=\max_c(\varepsilon^\mathrm{com}_c+\varepsilon^\mathrm{ind}_c)\). All attacks use identical PGD hyperparameters (chosen once based on initial experiments) and every returned candidate is pipeline-checked for (i) admissibility under \(\Delta(\cdot)\) (mask and per-channel bounds) and (ii) a prediction change under the deployed pipeline \(C\). Unless stated otherwise, we report \(\mathrm{RA}^{\mathrm{tool}}\) and \(\mathrm{RA}^{\mathrm{pipe}}\) on the test split (B=296) of Ricci et al. \cite{ricci_machine_2024}, where the CNN achieves clean accuracy \(0.939\).

\begin{table}[t]
\centering
\caption{Robust accuracy (RA) on the same test split (B=296) under different configurations. Each cell shows \(\mathrm{RA}^{\mathrm{tool}} \rightarrow \mathrm{RA}^{\mathrm{pipe}}\), i.e., tool-reported robustness in the optimization space followed by pipeline-checked robustness under \(C\) and \(\Delta(\cdot)\).}
\label{tab:expA}
\begin{tabular}{lcc}
\toprule
\textbf{Configuration} & \textbf{Foolbox} & \textbf{ART} \\
\midrule
Baseline
  & \(0.794 \rightarrow 0.794\)
  & \(0.838 \rightarrow 0.838\) \\
\midrule
No-normalization
  & \(0.733 \rightarrow 0.831\)
  & \(0.777 \rightarrow 0.868\) \\
No-padding 
  & \(0.669 \rightarrow 0.899\)
  & \(0.716 \rightarrow 0.919\) \\
\midrule
Naive (mask+renorm)
  & \(0.669 \rightarrow 0.669\)
  & \(0.713 \rightarrow 0.713\) \\
Naive (no mask/renorm)
  & \(0.095 \rightarrow 0.767\)
  & \(0.152 \rightarrow 0.818\) \\
\bottomrule
\end{tabular}
\vspace{0.25em}
\end{table}

\subsubsection{A1: Does preprocessing-awareness matter?}
The first experiment evaluates \emph{whether explicitly modeling per-sample normalization and left-padding }(Section~\ref{sec:attack-framework}) \emph{is necessary for robustness evaluation}.
For each configuration, we report \(\mathrm{RA}^{\mathrm{tool}} \rightarrow \mathrm{RA}^{\mathrm{pipe}}\), i.e., robust accuracy as measured in the attack's optimization space versus after pipeline-checking under the deployed pipeline and the true threat model \(\Delta(\cdot)\).

We compare three configurations: \textit{Baseline} (full wrapper: structured noise + per-sample \(z\)-norm + padding mask),
\textit{No-normalization} (as Baseline but using frozen \((\mu,\sigma)\) from \(x_0\)), and
\textit{No-padding} (as Baseline, but we ignore the left-padding mask (i.e., the attack may perturb padded indices)).

Both ablations remove constraints from the optimization problem, so the attack can exploit artifacts that do not transfer to the deployed pipeline.
Consequently, \(\mathrm{RA}^{\mathrm{tool}}\) can be overly pessimistic compared to \(\mathrm{RA}^{\mathrm{pipe}}\). We verify this by reconstructing each returned candidate \(\hat x\) (if needed) and re-evaluating it under the deployed pipeline \(C\).

We set per-channel budgets \((\varepsilon^\mathrm{com},\varepsilon^\mathrm{ind})=(0.10,0.02)\). These budget values were selected based on an empirical analysis of baseline noise characteristics observed in nominal BLM signal data, representing a conservative upper bound on plausible sensor fluctuations.

\paragraph{Results.}
Table~\ref{tab:expA} confirms that missing preprocessing-awareness can substantially distort robustness estimates:
for \textit{No-normalization} and \textit{No-padding}, the tool-side metric \(\mathrm{RA}^{\mathrm{tool}}\) is much lower than the pipeline-checked metric \(\mathrm{RA}^{\mathrm{pipe}}\) (gaps of \(9\)--\(23\) percentage points), because many candidates either become inadmissible or lose adversariality once evaluated under the deployed preprocessing. This is consistent with the pipeline mismatch illustrated in Figure \ref{fig:cc_pipeline}.
In contrast, the preprocessing-aware baseline shows \(\mathrm{RA}^{\mathrm{tool}}\approx \mathrm{RA}^{\mathrm{pipe}}\), indicating that the optimization problem matches the deployed pipeline.

\subsubsection{A2: Does the threat model matter?}
We now ask \emph{how much the admissible perturbation set itself affects measured robustness}. 
We compare (i) our structured model (common + per-channel noise) to (ii) a global \(L_\infty\) on a normalization-aware model on \(x\) (signal space), and (iii) a global \(L_\infty\) ball on \(z\) (normalized space). For evaluation fairness we set \(\varepsilon_{\mathrm{glob}} := \max_c(\varepsilon^\mathrm{com}_c + \varepsilon^\mathrm{ind}_c)\).

\begin{table}[t]
    \centering
   \caption{Comparison of clean accuracy and three RA configurations under the baseline model and an adversarially trained model. The configurations use the following channel budgets: RA@0.05 \((\varepsilon^\mathrm{com},\varepsilon^\mathrm{ind})=(0.05,0.01)\); RA@0.10 \((\varepsilon^\mathrm{com},\varepsilon^\mathrm{ind})=(0.1,0.02)\); RA@0.20 \((\varepsilon^\mathrm{com},\varepsilon^\mathrm{ind})=(0.2,0.04)\).}
    \begin{tabular}{lccccc}
        \toprule
        \textbf{Model} & \textbf{Clean acc}& \textbf{RA@0.05} & \textbf{RA@0.10} & \textbf{RA@0.20} \\
        \midrule
        Baseline    & 0.939& 0.872 & 0.794 & 0.392 \\
        Adv-trained (fine-tune) & 0.943& 0.919 & 0.862 & 0.578 \\
        \bottomrule
    \end{tabular}
    \label{tab:benchmark_summary}
\end{table}

\paragraph{Results.} Our experimental results (Table \ref{tab:expA}) show that a naive model of the adversarial noise underestimates the robustness of the classifier substantially. Combining the naive threat model with missing preprocessing-awareness yields a gap of \(\approx 70\) percentage points between \(\mathrm{RA}^{\mathrm{tool}}\) in the naive configuration and \(\mathrm{RA}^{\mathrm{pipe}}\) under our baseline threat model, highlighting the sensitivity of robustness conclusions to threat-modeling choices. This result shows that \emph{the adversarial noise model is highly relevant for the evaluation of adversarial robustness on real-world data}.

\subsection{Adversarial defenses}

Following the results on adversarial attacks, we evaluate the effectiveness of adversarial defenses on the crystal collimator, by performing \emph{adversarial training} on the CNN. For that we compare the clean (unperturbed) accuracy and three RA configurations of the baseline model with the adversarially trained model.

\paragraph{Adversarial training (fine-tune).}
Starting from the clean model, we fine-tune the model with on-the-fly PGD adversarial examples. Intuitively, this is intended to smooth the decision boundary and eliminate adversarial examples. For details we refer to \cite{madry_towards_2018,bai_recent_2021}. While not part of this experimental evaluation, we found that adversarial fine-tuning improved RA and general accuracy on our case study compared to full adversarial training.

\paragraph{Results.} The results in Table \ref{tab:benchmark_summary} show that fine-tune adversarial training increases RA by \(+4.7\,\%\) to \(+18.6\,\%\) at the reported radii while improving clean accuracy by \(+0.4\,\%\), suggesting a beneficial data-augmentation effect in our sparse training-data context. 

\begin{figure}[t]
    \centering
    \includegraphics[width=0.8\linewidth]{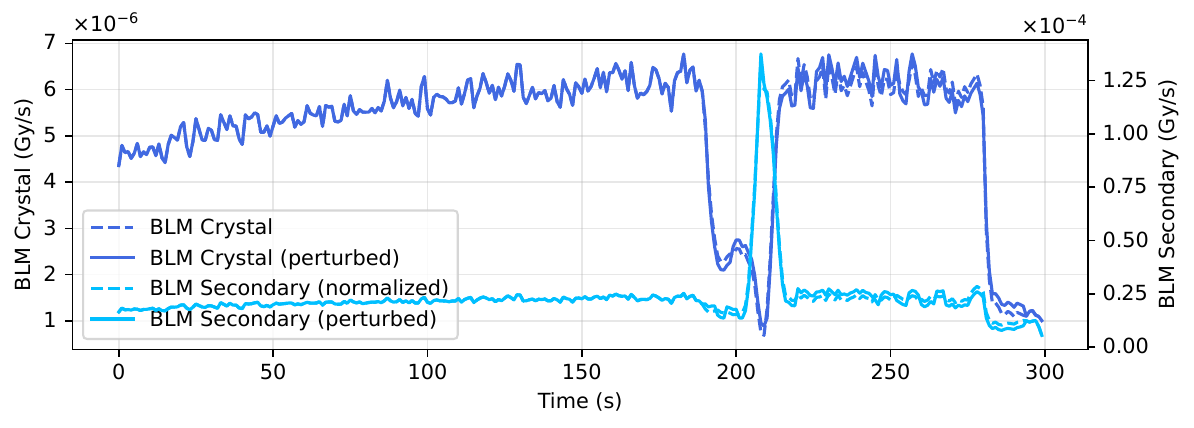}\\
    \vspace{-0.5cm}
    \includegraphics[width=0.8\linewidth]{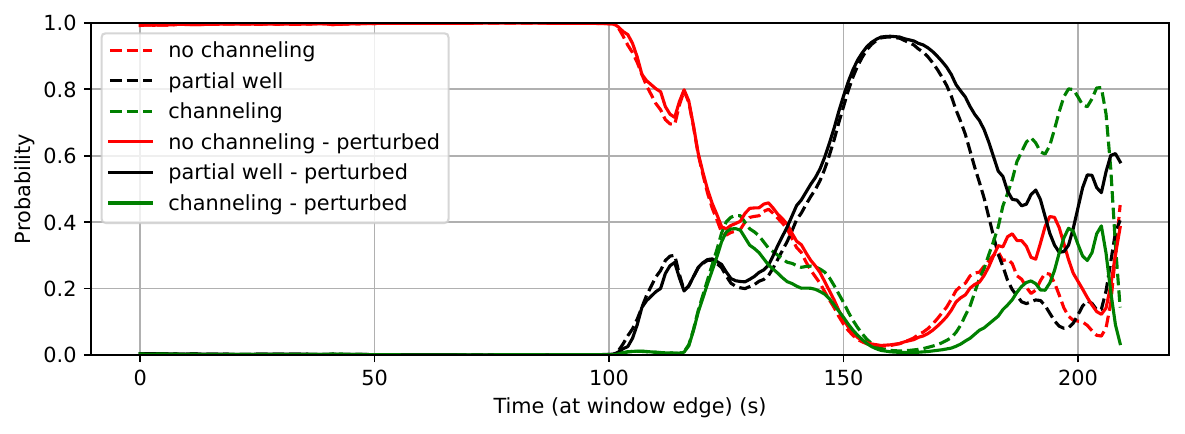}
    \caption{Proof-of-concept adversarial sequence attack. Top: BLM Crystal and Secondary signals (dashed) and their perturbed counterparts (solid). Bottom: class probabilities as the sliding window advances. The perturbation smoothly flips classifications from \textit{channeling} to \textit{partial well} in the targeted region [190, 205].}
    \label{fig:sequence-attack}
\end{figure}

\subsection{Adversarial sequence attacks}
We assess whether adversarial sequences can arise on real BLM data by constructing a single, time-localized perturbation on a representative trace that exhibits all three classes and performing a qualitative evaluation. We target the \emph{missed-channeling} threat model (Section~\ref{subsec:perturbation}), optimizing one perturbation over windows \([190,205]\) to reduce the true class (channeling) confidence. The perturbation is bounded per timestep and channel by the same \(L_\infty\) budget as used for the baseline in the previous experiments.

\paragraph{Results.} Figure~\ref{fig:sequence-attack} shows that a sequence perturbation consistently suppresses channeling confidence in the attacked region, flipping predictions to \emph{partial well}, while leaving the rest of the trace visually and semantically unchanged. This demonstrates the feasibility of adversarial sequence attacks under realistic noise constraints even with a simple proof-of-concept algorithm. Interestingly, the adversarial noise acts as a scaling effect on the probability curves.

\section{Related work}\label{sec:related-work}
At CERN, adversarial training has been explored at the LHC CMS experiment \cite{sarkar_run_2025,malara_exploring_2024}, and neural-network verification was applied to a cooling-tower control network \cite{lopez-miguel_verification_2023}. 

While neural-network verification has seen significant progress through randomized smoothing and deterministic tools such as Marabou and \(\alpha\beta\)-CROWN \cite{cohen_certified_2019,wang_beta-crown_2021,wu_marabou_2024}, applying these tools to real-world pipelines remains a challenge due to non-standard preprocessing. 
By encoding preprocessing as an explicit front-end layer, the end-to-end pipeline can be exported in standard interchange formats used by verification benchmarks (e.g., ONNX/VNNLIB in VNN-COMP \cite{brix_first_2023}). However, whether formal verification is possible depends on the tools' supported operator set and available relaxations. 

Realistic time series or video threat models have been studied in the literature, including smooth or structure-aware perturbations \cite{belkhouja_adversarial_2023,carlini_audio_2018,ding_black-box_2023,han_deep_2020,jiang_black-box_2019}. Our novelty is a sensor-consistent common\(+\)per-channel decomposition and the analysis of window-sequences on time-series data, which also parallels challenges in monitoring mission-critical sensors \cite{bartocci_specification-based_2018}.

Our reparameterization instantiates the pattern of composing a parameterized input transformation with the classifier, as in EOT \cite{athalye_synthesizing_2018} and semantic/contextual perturbation frameworks (Semantify-NN, DeepCert \cite{mohapatra_towards_2020,paterson_deepcert_2021}): we consider a single deterministic, \emph{data-dependent} transform and couple it with a change of variables that enforces structured budgets and padding by construction.

Adversarial sequences have been examined for recurrent models, video recognition, and sequential decision making \cite{papernot_crafting_2016,jiang_black-box_2019}. Our setting differs: RNN sequence attacks assume differentiable closed-loop dynamics; video attacks induce global per-frame pixel changes; and sequential decision making targets trajectory-level manipulations orthogonal to our objective. Our adversarial sequences can be interpreted as counterexamples to Signal Temporal Logic (STL) specifications about stability and persistence of classifications over time, aligning with formal monitoring efforts that seek to detect and bound failure-inducing behaviors in mission-critical infrastructures \cite{bartocci_specification-based_2018}.

\section{Conclusions}\label{sec:conclusion}
We studied the robustness of a crystal-alignment classifier under a channel-correlated model of BLM perturbations. Our robustness evaluation showed that small perturbations can cause substantial adversarial examples for the BLM classifier. These represent real counterexamples that can have impact on CERN operations. We found that we can improve robustness via adversarial fine-tuning, without harming clean classification accuracy. However, despite adversarial fine-tuning, successful attacks remain, indicating that fully autonomous alignment is still challenging.

Our evaluation shows that robustness can be substantially misestimated when any part of the deployed preprocessing pipeline or the intended threat model is omitted. This motivates treating preprocessing as part of the robustness problem: threat-model–aware wrappers make it possible to reuse standard gradient-based attack frameworks while ensuring perturbations remain valid for the deployed pipeline.
The same pattern is broadly applicable to other time-series control and monitoring settings (e.g., different BLM layouts, beam steering, or energy optimization at CERN) by adapting only the padding mask and per-channel budgets. More generally, the results transfer to other domains where global \(L_\infty\) perturbations are an insufficient threat model or complex preprocessing is unavoidable (e.g., medical and industrial-control pipelines).
For formal verification, applicability depends on whether preprocessing can be expressed with verifier-supported operators (typically affine/piecewise-linear) or whether sound, verifier-specific abstractions are available for nonlinear steps.

Future work will refine the noise model, develop verifier-specific abstractions for nonlinear preprocessing, and extend adversarial-sequence evaluation.

\bibliography{references.bib} 

\appendix
\section{Sequence-Level Robustness and Extendability}
This appendix formalizes the notion of extendability of adversarial sequences introduced in Section \ref{sec:threat-model} and clarifies its relationship to local robustness verification.

\subsection{Consistency and Extendability}
Recall that a sliding-window classifier evaluates a sequence of overlapping windows
\[
X_0, X_1, \dots, X_n,
\]
where each window \(X_{i+1}\) is obtained from \(X_i\) by removing the oldest sample and appending one new sample.
In Section~\ref{sec:threat-model} we require \emph{consistency} of adversarial sequences: two consecutive adversarial windows \(\hat X_i\) and \(\hat X_{i+1}\) may only differ in the newly appended sample (subject to the threat model), while all overlapping samples are shared.

\begin{definition}[Extendability]
Let \(\hat X_i\) be an adversarial window at index \(i\).
We say that \(\hat X_i\) is \emph{extendable} if there exists an admissible perturbation of the newly appended sample (under the threat model) such that the resulting window \(\hat X_{i+1}\) is also adversarial.
\end{definition}

This definition makes explicit that once \(\hat X_i\) is fixed the degrees of freedom for the extension are confined to the perturbation of the new sample only.

\subsection{Reduction to Local Robustness}

The following shows that deciding non-extendability reduces to a local robustness (non-existence) question for the next window.

\begin{proposition}[Non-extendability reduces to local robustness]
Fix an adversarial window \(\hat X_i\), assume the consistency rule described above, and let \(\delta_{i+1}\) be an admissible perturbation of only the newly appended sample (under the threat model).
Then \(\hat X_i\) is extendable iff
\[
\exists\, \delta_{i+1}:\quad
\arg\max C(\hat X_{i+1}(\delta_{i+1})) \neq \arg\max C(X_{i+1}).
\]
where \(\hat X_{i+1}(\delta_{i+1})\) denotes the window obtained by appending the perturbed new sample to \(\hat X_i\).
Thus, certifying that \(\hat X_i\) is non-extendable requires proving a universal non-existence claim over the set of admissible perturbations.
\end{proposition}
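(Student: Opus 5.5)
The plan is to unfold the definitions so that both sides of the equivalence become the same statement, and then read off the universal form by negating it.

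First I will fix notation. By the consistency rule, once \(\hat X_i\) is fixed, any consistent successor window is completely determined by the perturbation \(\delta_{i+1}\) applied to the newly appended sample. Concretely, \(\hat X_{i+1}(\delta_{i+1})\) keeps the last \(W-1\) perturbed samples of \(\hat X_i\) and appends \(x_{\text{new}}+\delta_{i+1}\). The set of consistent successors is therefore exactly \(\{\hat X_{i+1}(\delta_{i+1}) \mid \delta_{i+1}\ \text{admissible}\}\), where admissible means the new sample's entries satisfy the per-channel structured bounds of \(\Delta\). I also need the full window perturbation \(\hat X_{i+1}(\delta_{i+1})-X_{i+1}\) to lie in \(\Delta(X_{i+1})\). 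This holds because the overlapping coordinates inherit the bounds from \(\delta_i\in\Delta(X_i)\), and the new coordinate satisfies them by admissibility. The common-mode/independent decomposition is coordinatewise in \(t\), so the combined perturbation is again of the structured form.

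Next I will unfold ``adversarial''. By the definition of adversarial examples in Section~\ref{sec:threat-model}, \(\hat X_{i+1}\) is adversarial iff \(\hat X_{i+1}=X_{i+1}+\delta\) with \(\delta\in\Delta(X_{i+1})\) and \(\hat y(\hat X_{i+1})\neq\hat y(X_{i+1})\), where \(\hat y=\arg\max C\). The membership condition is already established above. So extendability, meaning the existence of an admissible \(\delta_{i+1}\) making \(\hat X_{i+1}\) adversarial, is literally \(\exists\,\delta_{i+1}:\arg\max C(\hat X_{i+1}(\delta_{i+1}))\neq\arg\max C(X_{i+1})\). This gives both directions of the iff at once.

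Finally, I negate. \(\hat X_i\) is non-extendable iff for all admissible \(\delta_{i+1}\) we have \(\arg\max C(\hat X_{i+1}(\delta_{i+1}))=\arg\max C(X_{i+1})\). This is a local robustness property, in the form of the property in Section~\ref{sec:introduction}, around the partially perturbed window over a one-sample perturbation set. Certifying it is a universal non-existence claim, which is what complete verifiers address.

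The only delicate point is the admissibility step: checking that the stitched window's perturbation lies in \(\Delta(X_{i+1})\). If \(\Delta\) carried window-global constraints, shifting could break membership. I will argue it from the coordinatewise \(L_\infty\) structure of the threat model. If that fails, I would instead read ``admissible'' as exactly this membership condition, in which case the equivalence holds by definition.
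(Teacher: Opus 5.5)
Your proposal is correct and is essentially the paper's own argument: consistency confines every extension of \(\hat X_i\) to the perturbation \(\delta_{i+1}\) of the new sample, so extendability unfolds to the stated existential condition and non-extendability is its universal negation. Your explicit check that the stitched perturbation stays in \(\Delta(X_{i+1})\), via the coordinatewise bounds or by folding it into admissibility, is a detail the paper's sketch leaves implicit. One small imprecision: you call the negated condition ``the property of Section~1''. It actually fixes the reference label \(\arg\max C(X_{i+1})\), so it coincides with the centred robustness query around \(\hat X_{i+1}(0)\) only when that window is itself classified as \(\arg\max C(X_{i+1})\); otherwise \(\delta_{i+1}=0\) already extends.
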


\begin{proof}[sketch]
By definition of consistency, any extension of \(\hat X_i\) to index \(i{+}1\) must share all overlapping samples with \(\hat X_i\) and can only perturb the newly appended sample.
Thus, an extension exists when there exists an admissible perturbation \(\delta_{i+1}\) such that the resulting window \(\hat X_{i+1}(\delta_{i+1})\) is misclassified.
Non-extendability is therefore equivalent to the absence of such a perturbation, which is a standard robustness non-existence property.
\end{proof}

\begin{remark}
The fact that \(\hat X_i\) is already adversarial can in some cases provide sufficient conditions for persistence of misclassification across windows.
For example, if the classifier logits admit a Lipschitz bound and the adversarial margin at \(\hat X_i\) is sufficiently large, then the bounded change induced by shifting the window and perturbing the newly appended sample may be insufficient to restore the correct classification.

However, these sufficient conditions do not eliminate the need for a universal argument when certifying non-extendability in the general case:
unless such bounds are tight, proving that \emph{no} admissible perturbation yields an adversarial \(\hat X_{i+1}\) still requires solving a robustness verification problem.
\end{remark}

\end{document}